\newtheorem{lemma}{Lemma}
\newtheorem{theorem}{Theorem}
\newtheorem{proposition}{Proposition}
\newcommand{\ep}{\varepsilon}
\newcommand{\MrSAA}{\textsc{Max-$r(n)$-Sat-AA}}
\newcommand{\var}{\text{\normalfont var}}
\newtheorem{krule}{Reduction Rule}
\begin{document}

\title{Parameterized Complexity of MaxSat Above Average\footnote{A preliminary version of this paper will appear in the proceedings of LATIN 2012}}

\author{Robert Crowston\thanks{Royal Holloway, University of London, Egham,
Surrey, UK} \and Gregory Gutin\footnotemark[2] \and Mark Jones\footnotemark[2] \and Venkatesh Raman\thanks{The Institute of Mathematical Sciences,
Chennai 600 113, India} \and Saket Saurabh\footnotemark[3]}

\date{}
\maketitle

\begin{center}
{\bf This paper is dedicated to the memory of Alan Turing}
\end{center}

\begin{abstract}
\noindent
In {\sc MaxSat}, we are given a CNF formula $F$ with $n$ variables and $m$ clauses and asked to find a truth assignment satisfying the maximum number of clauses. Let $r_1,\ldots, r_m$ be the number of literals in the clauses of $F$. Then ${\rm asat}(F)=\sum_{i=1}^m  (1-2^{-r_i})$ is the expected number of clauses satisfied by a random truth assignment (the truth values to the variables are distributed uniformly and independently). It is well-known that, in polynomial time, one can find a truth assignment satisfying at least ${\rm asat}(F)$ clauses. In the parameterized problem {\sc MaxSat-AA}, we are to decide whether there is a truth assignment satisfying at least ${\rm asat}(F)+k$ clauses, where $k$ is the (nonnegative) parameter. We prove that {\sc MaxSat-AA} is para-NP-complete
and thus, {\sc MaxSat-AA} is not fixed-parameter tractable unless P$=$NP. This is in sharp contrast to the similar problem {\sc MaxLin2-AA} which was recently proved to be fixed-parameter tractable by Crowston {\em et al.} (FSTTCS 2011).

In fact, we consider a more refined version of {\sc MaxSat-AA}, {\sc Max-$r(n)$-Sat-AA}, where $r_j\le r(n)$ for each $j$. Alon {\em et al.} (SODA 2010) proved that if $r=r(n)$ is a constant, then {\sc Max-$r$-Sat-AA} is fixed-parameter tractable. We prove that  {\sc Max-$r(n)$-Sat-AA} is para-NP-complete for $r(n)=\lceil \log n\rceil.$ We also prove that assuming the exponential time hypothesis, {\sc Max-$r(n)$-Sat-AA} is not in XP already for any $r(n)\ge \log \log n +\phi(n)$, where $\phi(n)$ is any unbounded strictly increasing function.
This lower bound on $r(n)$ cannot be decreased much further as we prove that {\sc Max-$r(n)$-Sat-AA} is (i) in XP for any $r(n)\le \log \log n - \log \log \log n$ and (ii) fixed-parameter tractable for any $r(n)\le \log \log n - \log \log \log n - \phi(n)$, where $\phi(n)$ is any unbounded strictly increasing function. The proof uses some results on {\sc MaxLin2-AA}.
\end{abstract}

\section{Introduction}

{\sc Satisfiability} is a well-known fundamental problem in Computer
Science. Its optimization version (finding the maximum number of clauses that can be satisfied by a truth assignment) and its generalizations (constraint satisfaction problems) are well studied in almost every paradigm of algorithms and complexity including approximation and parameterized complexity. Here we consider the parameterized complexity of a variation of {\sc MaxSat}.

In parameterized complexity, one identifies a natural parameter $k$ in the input and algorithms are designed and analyzed to confine the combinatorial explosion to this parameter, while keeping the rest of the running time to be polynomial in the size of the input. More specifically, the notion of feasibility is {\em fixed-parameter tractability} where one is interested in an algorithm whose running time is $O(f(k)n^c)$, where $f$ is an arbitrary (typically exponential) function of $k$, $n$ is the input size and $c$ is a constant independent of $k$. When the values of $k$ are relatively small, fixed-parameter tractability implies that the problem under consideration is tractable, in a sense. The class of fixed-parameter tractable problems will be denoted by FPT.
When the available running time is replaced by the much more powerful $n^{O(f(k))},$ we obtain the class XP, where each problem is polynomial-time solvable for any fixed value of $k.$ It is well-known that FPT is a proper subset of XP.
More details on parameterized algorithms and complexity are given at the end of this section.

A well-studied parameter in most optimization problems is the size of the solution. In particular, for {\sc MaxSat}, the natural parameterized question is whether a given boolean formula in CNF has an assignment satisfying at least $k$ clauses. Using the (folklore) observation that every CNF formula on $m$ clauses has an assignment satisfying at least $m/2$ clauses (a random assignment will satisfy at least $m/2$ clauses), Mahajan and Raman \cite{MR99} observed that this problem is fixed-parameter tractable. This lower bound, of $m/2$, for the maximum number of clauses, means that the problem is interesting only when $k > m/2$, i.e., when the values of $k$ are relatively large. Hence Mahajan and Raman introduced and showed fixed-parameter tractable, a more natural parameterized question, namely whether the given CNF formula has an assignment satisfying at least $m/2+k$ clauses.

This idea of parameterizing above a (tight) lower bound has been followed up in many directions subsequently. For {\sc MaxSat} alone, better (larger than $m/2$) lower bounds for certain classes of instances (formulas with no pair of unit clauses in conflict, for example) have been proved and the problems parameterized above these bounds have been shown to be fixed-parameter tractable \cite{ipec2010,GutJonYeo11}.
When every clause has $r$ literals, the expected number of clauses that can be satisfied by a (uniformly) random assignment can easily seen to be
$(1-1/2^r)m$, and
Alon {\em et al.} \cite{AloGutKimSzeYeo11} proved that checking whether $k$ more than this many clauses can be satisfied in such an $r$-CNF formula is
fixed-parameter tractable. This problem is known to be \textsc{Max-$r$-Sat-AA}.
The problem \MrSAA{} we consider in this paper, is a refinement of this problem, where $r$ need not be a constant.

More specifically, the problem \textsc{MaxSat-AA}\footnote{In this paper, AA is an abbreviation for Above Average}, we address is the following.
\begin{quote}
\textsc{MaxSat-AA} \\ \nopagebreak
  \emph{Instance:} A CNF formula $F$ with clauses $c_1,\ldots , c_m$, and variables $x_1,\ldots ,x_n$, and a nonnegative integer $k$.
  Clause $c_i$ has $r_i$ literals, $i=1,\ldots ,m$. \\
   \emph{Parameter:} $k$.\\
   \emph{Question:} Decide whether there is a truth assignment satisfying at least ${\rm asat}(F)+k$ clauses, where ${\rm asat}(F)=\sum_{i=1}^m  (1-2^{-r_i}).$
\end{quote}
The problem \MrSAA{} is a refinement of \textsc{MaxSat-AA} in which each clause has at most $r(n)$ literals.

%
%

Note that ${\rm asat}(F)$ is the average number of satisfied clauses. Indeed, if we assign {\sc true} or {\sc false} to each $x_j$ with probability $1/2$ independently of the other variables, then the probability of $c_i$ being satisfied is $1-2^{-r_i},$ and by linearity of expectation, ${\rm asat}(F)$ is the expected number of satisfied clauses. (Since our distribution is uniform, ${\rm asat}(F)$ is indeed the average number of satisfied clauses.) Let ${\rm sat}(F)$ denote the maximum number of clauses satisfied by a truth assignment. For Boolean variables $y_1,\ldots ,y_t$, the {\em complete set of clauses on} $y_1,\ldots ,y_t$ is the set $\{(z_1\vee \ldots \vee z_t) : z_i \in \{y_i,\bar{y_i} \}, i\in [t]\}$.
Any formula $F$ consisting of one or more complete sets of clauses shows that the lower bound ${\rm sat}(F)\ge {\rm asat}(F)$ on ${\rm sat}(F)$ is tight. Using the derandomization method of conditional expectations (see, e.g., \cite{alon}), it is easy to obtain a polynomial time algorithm which finds a truth assignment satisfying at least ${\rm asat}(F)$ clauses. Thus, the question asked in {\sc MaxSat-AA} is whether we can find a better truth assignment efficiently from the parameterized complexity point of view.

%
{\bf New Results.} Solving an open problem of \cite{CroFellowsEtAl11} we show that \textsc{MaxSat-AA} is not fixed-parameter tractable unless P$=$NP. More specifically, we show this for \textsc{Max-$r(n)$-Sat-AA} for
$r(n)=\lceil \log n\rceil.$  Also, we prove that unless the exponential time hypothesis (ETH) is false, \MrSAA{} is not even in XP for any $r(n)\ge \log \log n +\phi(n)$, where $\phi(n)$ is any unbounded strictly increasing function\footnote{A function $f$ is {\em strictly increasing} if for every pair $x',x''$ of values of the argument with $x'<x''$, we have $f(x')<f(x'').$}. These two results are proved in Section \ref{sec:neg}.

These results are in sharp contrast to the related problems \textsc{MaxLin2-AA} (see Section \ref{sec:pos} for a definition of this problem) and \textsc{Max-$r$-Sat-AA} which are known to be fixed-parameter tractable. Also this is one of the very few problems in the `above guarantee' parameterization world, which is known to be hard. See \cite{MahajanRamanSikdar09} for a few other hard above guarantee
problems.

Then, complementing our hardness results, we show that
the lower bound above on $r(n)$ cannot be decreased much further as we prove that \MrSAA{} is in XP for any $r(n)\le \log \log n - \log \log \log n$ and
fixed-parameter tractable for any $r(n)\le \log \log n - \log \log \log n - \phi(n)$, where $\phi(n)$ is any unbounded strictly increasing function. This result generalizes the one of Alon {\em et al.} \cite{AloGutKimSzeYeo11} and is proved in Section \ref{sec:pos}.

The problem we study is one of the few problems in the `above guarantee parameterization' where we parameterize above an instance-specific bound, as opposed to a generic bound, see \cite{MahajanRamanSikdar09} for a discussion on this issue.
Another example of such parameterizations is the problem {\sc Vertex Cover} parameterized above the maximum matching of the given graph. See \cite{ipec11,esa11} for recent results on this problem.

We complete this paper in Section \ref{sec:disc} by stating an open problem on permutation constraint satisfaction problems parameterized above average. \\

{\bf Basics on Parameterized Complexity.} A parameterized problem $\Pi$ can be considered as a set of pairs
$(I,k)$ where $I$ is the \emph{problem instance} and $k$ (usually a nonnegative
integer) is the \emph{parameter}.  $\Pi$ is called
\emph{fixed-parameter tractable (fpt)} if membership of $(I,k)$ in
$\Pi$ can be decided by an algorithm of runtime $O(f(k)|I|^c)$, where $|I|$ is the size
of $I$, $f(k)$ is an arbitrary function of the
parameter $k$ only, and $c$ is a constant
independent from $k$ and $I$. Such an algorithm is called an {\em fpt} algorithm.
Let $\Pi$ and $\Pi'$ be parameterized
problems with parameters $k$ and $k'$, respectively. An
\emph{fpt-reduction $R$ from $\Pi$ to $\Pi'$} is a many-to-one
transformation from $\Pi$ to $\Pi'$, such that (i) $(I,k)\in \Pi$ if
and only if $(I',k')\in \Pi'$ with $k'\le g(k)$ for a fixed
function $g$, and (ii) $R$ is of complexity
$O(f(k)|I|^c)$.

$\Pi$ belongs to XP  if membership of $(I,k)$ in
$\Pi$ can be decided by an algorithm of runtime $|I|^{O(f(k))}$, where $f(k)$ is an arbitrary function of the
parameter $k$ only.  It is well-known that FPT is a proper subset of
XP~\cite{DowneyFellows99,FlumGrohe06,Niedermeier06}.

$\Pi$ is in \emph{para-NP} if membership of $(I,k)$ in
$\Pi$ can be decided in nondeterministic time $O(f(k)|I|^c)$,
where $|I|$ is the size of $I$, $f(k)$ is an arbitrary function of the
parameter $k$ only,
and $c$ is a constant independent from $k$ and $I$. Here,
nondeterministic time means that we can use nondeterministic
Turing machine. A parameterized problem $\Pi'$ is {\em
para-NP-complete} if it is in para-NP and for any parameterized
problem $\Pi$ in para-NP there is an fpt-reduction from $\Pi$ to
$\Pi'$. It is well-known that a
parameterized problem $\Pi$ belonging to para-NP is para-NP-complete if we can reduce an
NP-complete problem to the subproblem of $\Pi$ when the parameter is equal to some
constant \cite{FlumGrohe06}.

For example, consider the {\sc $k$-Colorability} problem, where given a graph $G$ and a
positive integer $k$ ($k$ is the parameter), we are to decide whether
$G$ is $k$-colorable. Since the (unparameterized) {\sc Colorability} problem is in NP,
{\sc $k$-Colorability} is in para-NP. {\sc $k$-Colorability} is para-NP-complete since
{\sc $3$-Colorability} is NP-complete.

For further background and terminology on parameterized algorithms and complexity we
refer the reader to the monographs~\cite{DowneyFellows99,FlumGrohe06,Niedermeier06}.

For an integer $n$, $[n]$ stands for $\{1,\ldots ,n\}.$

\section{Hardness Results}\label{sec:neg}
In this section we give our hardness results. For our results we need the following problem as a starting point for our reductions.

\begin{quote}
\textsc{Linear-3-Sat} \\ \nopagebreak
  \emph{Instance:} A $3$-CNF formula $F$ with clauses $c_1,\ldots , c_m$, and variables $x_1,\ldots ,x_n$ such that $m\leq cn$ for some fixed  constant $c$. That is, number of clauses in $F$ is linear in the number of variables.\\
 \emph{Question:} Decide whether there is a truth assignment satisfying $F$.
 \end{quote}

It is well known that \textsc{Linear-3-Sat} is NP-complete. For example, the well-known theorem of Tovey \cite{Tov84} states that the {\sc 3-SAT}  problem is NP-complete even when the input consists of 3-CNF formula with every variable contained in at most four clauses.

\begin{theorem}\label{th1}
 \MrSAA{} is para-NP-complete for $r(n)=\lceil \log n\rceil$.
\end{theorem}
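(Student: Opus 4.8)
The plan is to show para-NP-completeness by reducing the NP-complete problem \textsc{Linear-3-Sat} to the subproblem of \MrSAA{} where the parameter $k$ is fixed at some constant, and then invoke the criterion (stated in the excerpt) that a para-NP membership together with an NP-complete fixed-parameter slice yields para-NP-completeness. Membership in para-NP is immediate, since the unparameterized \textsc{MaxSat} problem is in NP and we can guess a truth assignment and verify in polynomial time that it satisfies at least $\mathrm{asat}(F)+k$ clauses. So the heart of the matter is the hardness reduction.

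\medskip

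The key idea I would pursue is to exploit the fact that for a \emph{complete set of clauses} on $t$ variables, the lower bound $\mathrm{sat}(F)\ge\mathrm{asat}(F)$ is tight: any truth assignment satisfies exactly $2^t-1$ of the $2^t$ clauses, which equals $\mathrm{asat}$ of that gadget. Thus attaching complete sets of clauses contributes nothing to the ``above average'' surplus, and they can be used as padding to control the parameters. Given a \textsc{Linear-3-Sat} instance $F$ with $n$ variables and $m\le cn$ clauses, I would build a formula $F'$ as follows. For each clause $c_i=(z_1\vee z_2\vee z_3)$ of $F$, I would construct a gadget on the three literals of $c_i$ together with some auxiliary variables, designed so that the surplus over the average contributed by that gadget is strictly positive exactly when $c_i$ is satisfied, and zero (or negative) otherwise. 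The most natural way to do this is to take the complete set of clauses on the variables of $c_i$ but \emph{delete} the single clause $(\bar z_1\vee\bar z_2\vee\bar z_3)$ whose unsatisfying assignment is the one that falsifies $c_i$; then an assignment satisfying $c_i$ satisfies one more clause than the average predicts, while the bad assignment gains nothing. Summing over all $m$ clauses, the total surplus equals the number of satisfied clauses of $F$, so $F'$ has an assignment beating its average by $k=m$ if and only if $F$ is satisfiable.

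\medskip

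The main obstacle — and the reason the target $r(n)=\lceil\log n\rceil$ appears — is controlling the relationship between the surplus parameter $k$ and the clause width $r$ simultaneously, since for para-NP-completeness the parameter $k$ must be a \emph{constant}, not $m$. A surplus of $m$ is far too large. The fix is to boost the width: rather than width-$3$ gadgets giving surplus $1$ each, I would amalgamate the clauses of $F$ into a single high-width gadget, or use wide clauses so that the entire instance's satisfiability is encoded into a surplus of a fixed constant (say $k=1$). Concretely, introduce $O(\log(\text{number of variables}))$ fresh selector variables and form clauses of width about $\log N$ over the combined variable set (where $N$ is the new total number of variables), so that each potential satisfying assignment of $F$ corresponds to exactly one ``missing'' clause in an otherwise-complete set, yielding surplus exactly $1$. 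The width of these clauses must be kept at $\lceil\log N\rceil$, which is why padding with complete sets of clauses (to inflate $N$ appropriately while contributing zero surplus) is essential: the number of padding variables is chosen so that the required clause width equals $\lceil\log N\rceil$ for the final variable count $N$. Verifying that this balance can be struck — that the surplus is a constant, the width is exactly $\lceil\log N\rceil$, and the reduction runs in polynomial time (using $m\le cn$ to keep sizes polynomial) — is the delicate bookkeeping I expect to be the crux of the argument.
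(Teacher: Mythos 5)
Your overall strategy coincides with the paper's: establish para-NP membership, then reduce \textsc{Linear-3-Sat} to a constant-$k$ slice of \MrSAA{} using clauses of width $\lceil\log N\rceil$, a near-complete set of clauses on auxiliary variables, and padding to control the variable count. But the construction is left at the level of intentions, and the one quantitative claim you commit to is wrong. A complete set of clauses on $t$ variables with one clause removed consists of $2^t-1$ clauses of width $t$, so its contribution to the average is $(2^t-1)(1-2^{-t})$ and the best achievable surplus is $(2^t-1)\cdot 2^{-t}=1-2^{-t}<1$. Since your complete-set padding contributes surplus exactly $0$, no assignment of your instance can reach surplus $1$, and the $k=1$ slice you propose is trivially a {\sc No}-instance regardless of $F$. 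The paper avoids this by arranging that \emph{every} clause of the constructed formula $F'$ has width exactly $r=\lceil\log n'\rceil$ and that the total number of clauses is exactly $m'=2^{r+1}$, so that ${\rm asat}(F')=m'(1-2^{-r})=m'-2$ is an integer and the instance with $k=2$ asks precisely whether \emph{all} clauses of $F'$ can be satisfied. Hitting the clause count $2^{r+1}$ exactly (while giving every new variable an occurrence) is precisely what the all-positive padding clauses are for.

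The second gap is that you never specify how the clauses of $F$ enter the wide-clause instance. The sentence ``each potential satisfying assignment of $F$ corresponds to exactly one missing clause in an otherwise-complete set'' cannot be implemented as stated: $F$ may have exponentially many satisfying assignments, while the constructed formula has only polynomially many clauses. The device you need is the one the paper uses: take the complete set on selector variables $y_1,\ldots,y_r$ minus the all-negative clause, which forces $y_1=\cdots=y_r=$ {\sc true} in any assignment satisfying all clauses, and replace each clause $c_i$ of $F$ by $c_i\vee\bar{y}_4\vee\cdots\vee\bar{y}_r$, which has width exactly $r$ and, once the $y$'s are forced, is satisfied if and only if $c_i$ is. With that, ``satisfy all of $F'$'' is equivalent to satisfiability of $F$, and $m\le cn$ keeps $n'=O(n)$ and $m'=O(n)$ so the reduction is polynomial. (A further small slip in your warm-up: the clause of the complete set on $\var(c_i)$ that is falsified exactly by the assignment falsifying $c_i$ is $c_i$ itself, not $(\bar z_1\vee\bar z_2\vee\bar z_3)$.)
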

\begin{proof}
\MrSAA{} is in para-NP as given a truth assignment for an instance $\Phi$ of \MrSAA{}, we can decide, in polynomial time, whether the assignment satisfies at least ${\rm asat}(\Phi)+k$ clauses. To complete our proof of para-NP-completeness, we give a reduction from {\sc Linear-3-SAT} to \MrSAA{} with $k=2$.



Consider a {\sc 3-SAT} formula $F$ with $n$ variables, $x_1,\ldots, x_n$, and $m$ distinct clauses $c_1,\ldots, c_m$. Since $F$ is an input to
\textsc{Linear-3-Sat}, we may assume that $m \le cn$ for some positive constant $c$.

We form a \MrSAA{} instance $F'$ with $n'=2cn$ variables,  the existing variables $x_1,\ldots, x_n$, together with new variables $y_1,\ldots,y_{n'-n}$,
and $m'=2^{\lceil \log n'\rceil + 1}$ clauses. The set of clauses of $F'$ consists of three sets, $C_1$, $C_2$ and $C_3$, described below:

\begin{itemize}
  \item $C_1$ is the complete set of clauses on variables $y_1,\ldots,y_{\lceil \log n'\rceil}$ without the clause consisting of all negative literals, $\bar{c}=(\bar{y_1}\vee \bar{y_2}\vee \ldots \vee \bar{y}_{\lceil \log n' \rceil})$.
  \item $C_2=\{ c_i \vee \bar{y_4}\vee \ldots \vee\bar{y}_{\lceil \log n'\rceil}:\ i\in [m]\}$.
  \item $C_3$ is a set of $m'-|C_1|-|C_2|$  clauses on the variables $y_{\lceil \log n' \rceil+1},\ldots , y_{n'-n}$, of length $\lceil \log n'\rceil$ such that each variable appears in at least one clause and every clause consists of only positive literals.
\end{itemize}

We claim that $F$ is satisfiable if and only if $F'$ is a {\sc Yes}-instance of \MrSAA{} for $k=2$, thus completing the proof.
Since, in $F'$, each clause is of length $\lceil \log n'\rceil$, we have ${\rm asat}(F')=(1-2/m')m'=m'-2.$
Thus, \MrSAA{} for $k=2$ asks whether all the clauses of $F'$ can be satisfied.

Suppose $F$ is satisfied by a truth assignment $x^0.$ Extend this assignment to the variables of  $F'$ by assigning all $y_i$ to be {\sc true}. Since $F$ is satisfied, all the clauses in $C_2$ are satisfied. Every clause in $C_1$ and $C_3$ contains at least one positive literal, and so is satisfied. Hence $F'$ is satisfied.

If $F'$ is satisfied, then $y_1,\ldots,y_{\lceil \log n'\rceil}$ must all be set to {\sc true} (otherwise, there is a clause in $C_1$ that is not satisfied). As a result, the set $C_2$ of clauses can be simplified to the {\sc 3-SAT} formula $F$, and thus $F$ must be satisfied.
\end{proof}


It is not hard to prove Theorem \ref{th1} without starting from {\sc Linear-3-SAT}. We use {\sc Linear-3-SAT} to ensure that $n'=O(n)$ which is necessary in the proof of the next theorem. Hereafter, we will assume the Exponential Time Hypothesis (ETH), which is stated below.
\begin{quote}
Exponential Time Hypothesis (ETH)~\cite{RussellP01}: There is a positive real $s$ such that 3-SAT cannot be solved in time $2^{sn}n^{O(1)}$.
Here $n$ is the number of variables.
\end{quote}
Using the sparsification  lemma~\cite[Corollary 1]{RussellP01}, one may assume that in the input formula $F$ to {\sc 3-SAT},
every variable occurs in at most $p$ clauses for some positive constant $p$.  For completeness we sketch the proof here.
\begin{lemma}[\cite{RussellP01}] {\rm \bf (Sparsification Lemma)}
For every $\ep > 0$ and positive integer $r$, there is a constant $C$
so that any $r$-CNF formula $F$ with $n$ variables, can be expressed as
$F=\vee_{i=1}^t Y_i$, where $t \leq 2^{\ep n}$ and each $Y_i$ is an $r$-CNF formula
with at most $Cn$ clauses. Moreover, this disjunction can be computed by an algorithm
running in time $2^{\ep n}n^{O(1)}$.
\end{lemma}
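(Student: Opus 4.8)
The plan is to prove the Sparsification Lemma by a branch-and-reduce procedure that repeatedly replaces a ``dense'' $r$-CNF formula by a disjunction of two strictly simpler formulas, preserving satisfiability, until every leaf of the resulting branching tree is a sparse $r$-CNF formula with at most $Cn$ clauses. It is convenient to regard each clause as a set of at most $r$ literals. For a set $H$ of literals (a \emph{heart}) of size $i$, say $H$ is \emph{popular} if it is contained in at least $a_i$ clauses of the current formula, where $a_1,\ldots,a_r$ is a hierarchy of frequency thresholds to be fixed in the analysis. The procedure branches as long as some heart is popular, and stops only at formulas in which no heart of any size is popular.

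First I would record the logical decomposition driving each branch. Given a popular heart $H=\{\ell_1,\ldots,\ell_i\}$ with $i<r$, every assignment either satisfies the disjunction $\ell_1\vee\cdots\vee\ell_i$ or sets all of $\ell_1,\ldots,\ell_i$ to {\sc false}. Hence $F$ is satisfiable if and only if $F_A$ or $F_B$ is, where $F_A$ is obtained from $F$ by adding the short clause $(\ell_1\vee\cdots\vee\ell_i)$ and deleting every clause that contains $H$ as a subset (such clauses are now subsumed), and $F_B$ is obtained by deleting the literals $\ell_1,\ldots,\ell_i$ from every clause that contains them (these literals are now {\sc false}). Both children are $r$-CNF formulas on the same variables; crucially, in $F_A$ we remove at least $a_i$ clauses, and in $F_B$ we shorten at least $a_i$ clauses. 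Applying this rule recursively, with the heart chosen according to a selection rule (e.g.\ largest available core) dictated by the analysis, produces the disjunction $F=\bigvee_{i=1}^t Y_i$, and by construction satisfiability is preserved across the whole tree.

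Next I would establish the sparsity of the leaves. At a leaf no heart is popular, so in particular no single literal lies in $a_1$ or more clauses. A direct double-counting of clause--literal incidences then gives that the number of clauses is at most $2n\,a_1$: each clause contributes at least $1$ to the incidence count, while the total incidence count is at most $2n\,a_1$ since there are $2n$ literals, each in fewer than $a_1$ clauses. Taking $a_1$ constant (in terms of $r$ and $\ep$) yields the required bound at most $Cn$ on each $Y_i$. The same counting, read contrapositively, shows that a dense formula always admits a popular heart, so the procedure never stalls before reaching sparse leaves.

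The main obstacle is the bound $t\le 2^{\ep n}$ on the number of leaves; the reductions themselves are elementary, but a naive branching analysis only yields $2^{\Omega(n)}$ with a constant far larger than $\ep$. I would control the tree by an amortized potential argument: assign to the formula a weight that accounts for clause lengths (equivalently, the number of shortenings still possible), show that each branch decreases this potential by an amount proportional to the threshold $a_i$ of the heart branched upon, and then choose the hierarchy $a_1,\ldots,a_r$ growing fast enough as a function of $r$ and $\ep$ that the effective branching number per variable is forced below $2^{\ep}$. Charging each branch against the many clauses it eliminates or shortens, and summing the geometric contributions over the at most $r$ heart sizes, gives $t\le 2^{\ep n}$; getting this tuning exactly right is the delicate part. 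Finally, each reduction step (locating a popular heart by exhaustive search over bounded-size literal sets and performing the substitution) is polynomial in the formula size, and the branching tree has $O(2^{\ep n})$ nodes, so the whole disjunction is produced in time $2^{\ep n}n^{O(1)}$, with any polynomial overhead absorbed by shrinking $\ep$ slightly at the outset.
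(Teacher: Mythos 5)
First, note that the paper does not prove this lemma at all: it is quoted verbatim from Impagliazzo, Paturi and Zane \cite{RussellP01} and used as a black box (the only thing the paper proves ``for completeness'' is the subsequent Proposition about {\sc Linear 3-SAT}, which merely invokes the lemma). So your proposal is being measured against the original IPZ argument, and at the level of architecture you have reproduced it faithfully: the branching on a ``popular heart'' $H$ with the two children $F_A$ (assert the clause $\ell_1\vee\cdots\vee\ell_i$, discard the subsumed clauses) and $F_B$ (set all literals of $H$ to {\sc false}, shorten the clauses containing them) is exactly the IPZ sparsification step, the equivalence $F\equiv F_A\vee F_B$ is argued correctly, and the double-counting at the leaves (no popular literal implies at most $2na_1$ clauses) is the right way to get the $Cn$ bound.

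The genuine gap is the bound $t\le 2^{\varepsilon n}$, which you acknowledge but do not prove, and this is not a detail that can be deferred: it is the entire technical content of the lemma. Everything else in your sketch is easy, and indeed a trivial decomposition (branch on all $n$ variables) already writes $F$ as a disjunction of $2^n$ sparse formulas; the whole point is to get the exponent down to an arbitrarily small $\varepsilon$. Your proposed ``amortized potential argument'' is never instantiated --- you do not define the potential, do not specify the thresholds $a_1,\ldots,a_{r-1}$ as functions of $\varepsilon$ and $r$, and do not explain why the branching tree is so unbalanced that its leaf count stays below $2^{\varepsilon n}$ rather than $2^{\Theta(n)}$. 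A naive accounting fails here: each clause can be shortened up to $r$ times and there can be $\Theta(n^r)$ distinct clauses, so the recursion depth alone gives only a superexponential bound on the number of leaves. The actual IPZ analysis needs the hierarchical thresholds together with a careful count of how many times each branch direction can be taken along a root-to-leaf path (bounding the leaves by a binomial coefficient in those counts), and choosing the $a_i$ to make that binomial at most $2^{\varepsilon n}$. Without carrying this out, your argument establishes only that $F$ decomposes into sparse formulas via \emph{some} branching tree, which is not the statement being proved. As written, the proposal is an accurate road map to the cited proof rather than a proof.
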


\begin{proposition}[\cite{RussellP01}]
Assuming the ETH, there is a positive real $s'$ such that Linear 3-SAT cannot be solved in time $O(2^{s'n})$.
\end{proposition}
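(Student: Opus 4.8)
The plan is to derive the statement directly from the Exponential Time Hypothesis by means of the Sparsification Lemma, exhibiting an explicit witness $s'$ rather than arguing by an abstract contradiction. Let $s$ be the constant supplied by the ETH, so that \textsc{3-SAT} cannot be solved in time $2^{sn}n^{O(1)}$. I would claim that $s'=s/3$ (any constant strictly below $s$, split appropriately, would do) witnesses the proposition, and prove this by contradiction: assume that \textsc{Linear 3-SAT} can be solved in time $O(2^{s'n})$, and show that this would yield an algorithm for general \textsc{3-SAT} beating the ETH bound.

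First I would take an arbitrary $3$-CNF formula $F$ on $n$ variables and apply the Sparsification Lemma with $r=3$ and $\ep=s/3$. This produces, in time $2^{\ep n}n^{O(1)}$, a disjunction $F=\vee_{i=1}^t Y_i$ with $t\le 2^{\ep n}$, where each $Y_i$ is a $3$-CNF formula on (a subset of) the same $n$ variables and has at most $Cn$ clauses. By construction, $F$ is satisfiable if and only if at least one $Y_i$ is satisfiable. The key observation is that each $Y_i$ is itself a legitimate instance of \textsc{Linear 3-SAT}: regarding $Y_i$ as a formula on all $n$ variables (variables not occurring are treated as don't-cares, which only pads the variable count upward), its number of clauses is at most $Cn$, so the linearity bound $m\le cn$ holds with the fixed constant $c=C$.

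Next I would run the assumed \textsc{Linear 3-SAT} algorithm on each $Y_i$ in turn. Each call costs $O(2^{s'n})$, and there are at most $t\le 2^{\ep n}$ calls, so the total time for the solving phase is $2^{\ep n}\cdot O(2^{s'n})=O(2^{(\ep+s')n})$; adding the sparsification time $2^{\ep n}n^{O(1)}$ keeps the bound at $2^{(\ep+s')n}n^{O(1)}$. With $\ep=s'=s/3$ this is $2^{(2s/3)n}n^{O(1)}$, which decides satisfiability of $F$ in time $2^{s''n}n^{O(1)}$ for $s''=2s/3<s$, contradicting the ETH. Hence no such \textsc{Linear 3-SAT} algorithm exists and $s'=s/3$ is a valid witness.

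The only genuinely delicate point is the bookkeeping that keeps each $Y_i$ inside the class \textsc{Linear 3-SAT}: I must ensure the clause-to-variable ratio stays bounded by a single constant, which is why the padding of $Y_i$ up to the full $n$ variables matters, so that an exceptionally small number of surviving variables cannot inflate the effective ratio $m/n_i$. The other point requiring care is the order of quantifiers: the constants $\ep$ and $s'$ must be chosen after, and in terms of, the ETH constant $s$, so that their sum lands strictly below $s$. Everything else is a routine composition of the two running-time bounds.
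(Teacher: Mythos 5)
Your proof is correct and follows essentially the same route as the paper's: both apply the Sparsification Lemma to split a general 3-CNF formula into at most $2^{\varepsilon n}$ linear instances, run the hypothesised fast \textsc{Linear 3-SAT} algorithm on each, and obtain a running time for 3-SAT that contradicts the ETH. The only difference is one of presentation: you exhibit the explicit witness $s'=s/3$, whereas the paper negates the proposition wholesale (assuming an $O(2^{s'n})$ algorithm for every $s'>0$) and derives that 3-SAT is solvable in time $2^{cn}n^{O(1)}$ for every $c>0$.
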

\begin{proof}
Suppose the proposition does not hold. That is, for all positive real $s'$ {\sc Linear 3-SAT} can be solved in time $O(2^{s'n})$.
Now consider a 3-CNF formula $F$.  Using the algorithm for {\sc Linear 3-SAT} we will show that for every positive real $c$,
3-SAT can be solved in time $2^{cn}n^{O(1)}$, contradicting the ETH.
Let ${\ep}'<c/2.$ Using Sparsification Lemma, we produce linear $3$-CNF formulas
$Y_1,\ldots ,Y_t, t\leq 2^{{\ep}' n}$, in time $2^{{\ep}' n}n^{O(1)}$. We can solve
all of them in time $2^{2{\ep}' n} n^{O(1)}=2^{cn}n^{O(1)}$ and so obtain a solution for $F$ in time
$O(2^{cn})$, a contradiction.
\end{proof}
Thus, it shows that  {\sc Linear-3-SAT} cannot be solved in time $2^{o(n)}$ unless ETH fails.
Using the ETH, we strengthen Theorem \ref{th1}.

\begin{theorem}\label{th2}
Assuming the ETH, \MrSAA{} is not in XP for any $r(n) \ge \log\log n+\phi(n)$, where $\phi(n)$ is any unbounded strictly increasing function of $n$.
\end{theorem}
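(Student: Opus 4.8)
The plan is to upgrade the reduction of Theorem \ref{th1} into a \emph{sub-exponential} reduction from \textsc{Linear-3-Sat}, so that an XP algorithm for \MrSAA{} would decide \textsc{Linear-3-Sat} on $n$ variables in time $2^{o(n)}$, contradicting the Proposition above (hence the ETH). The guiding observation is that the gadget of Theorem \ref{th1} only needs clauses of length about $\log m$, whereas the \emph{admissible} clause length of an instance is measured as $r(\nu)$ in terms of its number of variables $\nu$. So I would keep the actual clause length near $\log n$ but pad the instance with a huge number of \emph{dummy} variables occurring in no clause, thereby inflating $\nu$ until $r(\nu)=\log\log\nu+\phi(\nu)$ exceeds this length. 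Since $r$ is unbounded and increasing, the required $\nu$ exists; the whole content of the proof is to show that $\nu$ stays sub-exponential in $n$.

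Concretely, from a \textsc{Linear-3-Sat} instance $F$ with $n$ variables and $m\le cn$ clauses, I first fix $t=\lceil\log m\rceil$, so that $2^{t}\ge m$, and reproduce the three clause sets $C_1,C_2,C_3$ from the proof of Theorem \ref{th1} with this $t$, taking the total number of clauses to be $M=2^{t+1}$ and the parameter $k=2$. Exactly as before, every clause has length $t$, so ${\rm asat}=M-2$ and the instance is a \textsc{Yes}-instance (for $k=2$) iff all clauses are simultaneously satisfiable iff $F$ is satisfiable; here $M=\Theta(n)$ and only $O(n)$ genuine variables are used. I then adjoin dummy variables, none occurring in any clause, until the total count equals the least $\nu$ with $r(\nu)\ge t$. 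The dummies change neither ${\rm asat}$ nor satisfiability, but raise the admissible length to $r(\nu)\ge t$, so all clauses have length $t\le r(\nu)$, as \MrSAA{} requires.

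The heart of the argument is the bound $\nu=2^{o(n)}$. Since $r(\nu)\ge\log\log\nu+\phi(\nu)$, the least $\nu$ with $r(\nu)\ge t$ is at most the least $\nu$ with $\log\log\nu+\phi(\nu)\ge t$; at this threshold $\log\log\nu=t-\phi(\nu)$, whence $\log\nu=2^{\,t-\phi(\nu)}=2^{t}/2^{\phi(\nu)}=\Theta(n)/2^{\phi(\nu)}$. This forces $\log\nu=o(n)$: if $\log\nu=\Omega(n)$ held, then $\nu=2^{\Omega(n)}\to\infty$, so $\phi(\nu)\to\infty$ (as $\phi$ is unbounded and increasing), so $2^{\phi(\nu)}\to\infty$ and $\Theta(n)/2^{\phi(\nu)}=o(n)$, a contradiction; existence and uniqueness of the threshold follow from the monotonicity of the two sides. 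Hence $\nu=2^{o(n)}$, and because the dummy variables are listed explicitly the instance has size $N=\Theta(\nu)=2^{o(n)}$ and is produced in time $2^{o(n)}$.

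Finally, suppose \MrSAA{} were in XP, i.e. decidable in time $|I|^{O(f(k))}$ for some $f$. On the instances above, where $k=2$ is constant, this runs in time $N^{O(f(2))}=N^{O(1)}=\big(2^{o(n)}\big)^{O(1)}=2^{o(n)}$, deciding \textsc{Linear-3-Sat} on $n$ variables in time $2^{o(n)}$ and contradicting the Proposition, hence the ETH. I expect the main obstacle to be precisely the size bound $\nu=2^{o(n)}$: one must verify the $\log\log$-versus-$\phi$ balance so that the padding needed to disguise a $\log n$-length construction as a $\log\log\nu$-length one stays sub-exponential, and one must keep the dummy variables explicit in the encoding — this is exactly what makes the reduction sub-exponential rather than polynomial, and thus yields an ETH-conditional ``not in XP'' statement rather than para-NP-completeness.
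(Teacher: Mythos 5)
Your proposal is correct and follows essentially the same route as the paper: reuse the Theorem~\ref{th1} construction, pad the variable set until the admissible clause length $r(\cdot)$ catches up with the actual clause length $\approx\log n'$, show via the $\log\log$-versus-$\phi$ computation that the padded size is $2^{o(n)}$, and conclude from the sub-exponential lower bound for \textsc{Linear-3-Sat}. The only (cosmetic) difference is that the paper attaches each padding variable to a pair of contradicting unit clauses $(x),(\bar x)$ --- so that every variable genuinely occurs in the formula, after checking that this shifts ${\rm asat}$ and ${\rm sat}$ by the same amount --- whereas you leave the dummies unused, which is fine under the paper's instance format where the variable list is explicit.
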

\begin{proof}
Let $\phi(n)$ be an unbounded strictly increasing function of $n$.
Note that if the theorem holds for some unbounded strictly increasing function $\psi(n)$ of $n$, it holds also for any strictly increasing function $\psi'(n)$ such that $\psi'(n)\ge \psi(n)$ for every $n\ge 1.$
Thus, we may assume that $\phi(n)\le \log\log n$.

Consider a {\sc 3-SAT} instance $I$ with a linear number of clauses, and reduce it to a \textsc{Max-$r(n')$-Sat-AA} instance with $r(n')=\lceil \log n'\rceil$ as in the proof of Theorem \ref{th1}. Note that $I$ has $O(n')$ variables.
Let $F'$ be the formula of the \textsc{Max-$r(n')$-Sat-AA} instance and let $n$ be the maximum integer such that $\log n'\ge \log\log n + \phi(n)$. Add $n-n'$ new variables to $F'$ together with a pair of contradicting unit clauses, $(x)$, $(\bar{x})$ for each new variable. Let $F$ denote the resulting formula.
The total number $n$ of variables in $F$ is such that $r(n)=\lceil \log n' \rceil\ge \log\log n + \phi(n)$. Note that $$n\le 2^{n'/{2^{\phi(n)}}}\le 2^{n'/{2^{\phi(n')}}}= 2^{o(n')},$$ and hence it takes $2^{o(n')}$ time to construct $F$.

Observe that for any value of $k$, $F'$ is a {\sc Yes}-instance of \textsc{Max-$r(n')$-Sat-AA} if and only if $F$ is a {\sc Yes}-instance of \textsc{Max-$r(n)$-Sat-AA}. We established in the proof of Theorem \ref{th1} that $F'$ is a {\sc Yes}-instance of \textsc{Max-$r(n')$-Sat-AA} for $k=2$  if and only if the {\sc 3-SAT} instance is satisfiable. Thus, $F$ is a {\sc Yes}-instance of \textsc{Max-$r(n)$-Sat-AA} for $k=2$  if and only if the {\sc 3-SAT} instance $I$ is satisfiable. Therefore, if there was an XP algorithm for \MrSAA{}, then for $k=2$ it would have running time $n^{O(1)}=2^{o(n')}$, contradicting the ETH.
\end{proof}

\section{Algorithmic Results}\label{sec:pos}

To prove the main result of this section, Theorem \ref{thm:maxsataafpt}, we reduce \MrSAA{} to \textsc{Max-$r(n)$-Lin2-AA} defined below.

In the problem \textsc{MaxLin2}, we are given a system $S$ consisting of
$m$ equations in $n$ variables, where each equation is of the form $\prod_{i \in I}x_i = b$,
$b \in \{-1, 1\}$ and each variable $x_i$ may only take a value from $\{-1, 1\}$.
Each equation is assigned a positive integral weight and we wish to find an assignment of
values to the variables in order to maximize the total weight of satisfied equations.

Let $W$ be the sum of the weights of all equations in $S$
and let ${\rm sat}(S)$ be the maximum total weight of equations that can be satisfied simultaneously.
Note that $W/2$ is a tight lower bound on ${\rm sat}(S)$. Indeed, consider choosing assignments to the
variables uniformly at random. Then $W/2$ is the expected weight of satisfied equations
(as the probability of each equation to
be satisfied is $1/2$) and so is a lower bound; to see
the tightness consider a system consisting of pairs of equations
of the form $\prod_{i\in I}x_i=-1,\ \prod_{i\in I}x_i=1$ of weight 1, for some non-empty $I \subseteq [n]$.
This leads to the following parameterized problem:


\begin{quote}
  {\sc Max-$r(n)$-Lin2-AA}\\ \nopagebreak
    \emph{Instance:} A system $S$ of equations $\prod_{i \in I_j}x_i = b_j$, where $b_j \in \{-1,1\}$, $|I_j|\le r(n)$, $x_i \in \{-1, 1\}$, and $j\in [m]$,
in which Equation $j$ is assigned a positive integral weight $w_j$, and a nonnegative integer $k$.
Let $W=\sum_{j=1}^m w_j.$\\ \nopagebreak
    \emph{Parameter:} $k$.\\ \nopagebreak
    \emph{Question:} Decide whether ${\rm sat}(S)\ge W/2+k$. \nopagebreak
  \end{quote}

The \emph{excess} for $x^0=(x^0_1,\ldots ,x^0_n)\in \{-1, 1\}^n$ over $S$ is $$\varepsilon_S(x^0)=\frac{1}{2}\left[\sum_{j=1}^m c_j\prod_{i\in I_j}x^0_i\right],$$ where $c_j=w_{j}b_j$. Observe that $\varepsilon_S(x^0)$ is the difference between the total weight of equations satisfied by $x^0$ and $W/2$. Thus, the answer to {\sc MaxLin2-AA} is {\sc Yes} if and only if
$\varepsilon_S(x^0)\ge k$ for some $x^0$.

Consider two reduction rules for {\sc MaxLin2} introduced in \cite{GutKimSzeYeo11}.

  \begin{krule}\label{rulerank}
Let $A$ be the matrix over $\mathbb{F}_2$ corresponding to the set of equations in $S$, such that $a_{ji} = 1$ if variable $x_i$ appears in equation $e_j$, and $0$ otherwise.
  Let $t={\rm rank} A$ and suppose columns $a^{i_1},\ldots ,a^{i_t}$ of $A$ are linearly independent.
  Then delete all variables not in $\{x_{i_1},\ldots ,x_{i_t}\}$ from the equations of $S$.
  \end{krule}

  \begin{krule}\label{rule1}
  If we have, for a subset $I$ of $[n]$, an equation $\prod_{i \in I} x_i =b_I'$
  with weight $w_I'$, and an equation $\prod_{i \in I} x_i =b_I''$ with weight $w_I''$,
  then we replace this pair by one of these equations with weight $w_I'+w_I''$ if $b_I'=b_I''$ and, otherwise, by
  the equation whose weight is bigger, modifying its
  new weight to be the difference of the two old ones. If the resulting weight
  is~0, we delete the equation from the system.
  \end{krule}

The two reduction rules are of interest due to the following:

  \begin{lemma}\cite{GutKimSzeYeo11}
  Let $S'$ be obtained from $S$ by Rule~\ref{rulerank} or \ref{rule1}.
  Then the maximum excess of $S'$  is equal to the maximum excess of $S$.
  Moreover, $S'$ can be obtained from $S$ in time polynomial in $n$ and $m$.
  \end{lemma}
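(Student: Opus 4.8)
The plan is to treat the two rules separately, proving in each case that the maximum excess is preserved, and then to note that both rules are implementable in polynomial time. Throughout I would use the substitution $x_i=(-1)^{u_i}$ with $u_i\in\{0,1\}$, writing $u=(u_1,\ldots,u_n)$, so that for any equation with variable set $I_j$ one has $\prod_{i\in I_j}x_i=(-1)^{(Au)_j}$, where $A$ is the $\mathbb{F}_2$-matrix of the system. The crucial observation is that the excess
$$\varepsilon_S(x^0)=\frac12\sum_{j=1}^m c_j(-1)^{(Au)_j},\qquad c_j=w_jb_j,$$
depends on the assignment only through the vector $Au\in\mathbb{F}_2^m$; hence, to compare maximum excesses, it suffices to compare the sets of attainable patterns together with the coefficients $c_j$ sitting in front of them.

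For Rule~\ref{rule1} I would argue pointwise. For a fixed assignment, the two equations on the common set $I$ contribute $(w'_Ib'_I+w''_Ib''_I)\prod_{i\in I}x_i$ to the sum $\sum_j c_j\prod_{i\in I_j}x_i$. A short case analysis then shows that the single replacement equation carries exactly this combined coefficient $w'_Ib'_I+w''_Ib''_I$ in front of $\prod_{i\in I}x_i$: when $b'_I=b''_I$ the weights add; when $b'_I\ne b''_I$ and $w'_I\ne w''_I$ the surviving equation keeps the sign of the heavier one and weight $|w'_I-w''_I|$; and when $b'_I\ne b''_I$ with $w'_I=w''_I$ both sides cancel to coefficient $0$, consistent with deleting the equation. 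Hence $\varepsilon_{S'}(x^0)=\varepsilon_S(x^0)$ for \emph{every} $x^0$, so in particular the maxima agree.

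For Rule~\ref{rulerank} the key step, and the main obstacle, is to show that deleting the dependent columns does not shrink the set of attainable patterns. The map $u\mapsto Au$ has image equal to the span of the columns of $A$ in $\mathbb{F}_2^m$; since $a^{i_1},\ldots,a^{i_t}$ are $t={\rm rank}\,A$ linearly independent columns, they form a basis of this column space, so the submatrix $A'$ formed by columns $i_1,\ldots,i_t$ has exactly the same column space. Consequently $\{Au:u\in\mathbb{F}_2^n\}=\{A'u':u'\in\mathbb{F}_2^t\}$ as subsets of $\mathbb{F}_2^m$. After the rule, equation $j$ reads $\prod_{i\in I_j\cap\{i_1,\ldots,i_t\}}x_i=b_j$, i.e.\ its product equals $(-1)^{(A'u')_j}$, while the coefficients $c_j$ are untouched; thus both systems maximise the same function $\tfrac12\sum_j c_j(-1)^{(\cdot)_j}$ as its argument ranges over the common column space, and their maximum excesses coincide. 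I would stress here that it is the column space (the image of $u\mapsto Au$), and not the row space, that controls the attainable patterns, and that the whole point is to verify that restricting to the independent columns preserves precisely this image.

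Finally, for the running time, Rule~\ref{rulerank} needs only Gaussian elimination over $\mathbb{F}_2$ to compute the rank and a maximal independent set of columns, after which the deletions are immediate; Rule~\ref{rule1} can be applied by sorting the equations according to their variable set $I_j$ and merging the resulting equal blocks. Both procedures are polynomial in $n$ and $m$, which completes the proof.
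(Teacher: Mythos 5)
The paper does not actually prove this lemma; it is imported verbatim from \cite{GutKimSzeYeo11}, so there is no in-paper proof to compare against. Your argument is correct and is essentially the standard one: for Rule~\ref{rule1} the quantity $\frac12\sum_j c_j\prod_{i\in I_j}x_i$ is preserved pointwise because in every case the surviving equation carries exactly the combined coefficient $w'_Ib'_I+w''_Ib''_I$ (which also silently absorbs the change in the total weight $W$), and for Rule~\ref{rulerank} the excess depends on the assignment only through $Au\in\mathbb{F}_2^m$, whose range is the column space of $A$ and is unchanged when one restricts to a basis of columns; the polynomial-time claims via Gaussian elimination and sorting are likewise fine.
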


Using techniques from linear algebra, the authors of \cite{CroFellowsEtAl11} showed the following:

\begin{theorem}\label{lemYes}
Let $J$ be an instance of {\sc Max-$r(n)$-Lin2-AA} in which system $S$ is reduced with respect to Rules \ref{rulerank} and \ref{rule1}.
If $n\ge (2k-1)r(n)+1$ then the answer to $J$ is {\sc Yes}.
\end{theorem}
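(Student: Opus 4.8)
The plan is to show that the two reduction rules force enough linear-algebraic structure on $S$ to build an assignment whose excess is at least $k$. Recall that the excess of $x^0$ is $\varepsilon_S(x^0)=\frac12\sum_j c_j\prod_{i\in I_j}x^0_i$ with $c_j=w_jb_j$, that a satisfied equation of weight $w_j$ contributes $+w_j/2$ and an unsatisfied one $-w_j/2$, and that $J$ is a {\sc Yes}-instance iff some $x^0$ has $\varepsilon_S(x^0)\ge k$. First I would record what the rules buy us: after Rule~\ref{rulerank} the matrix $A$ has full column rank, so $n=\mathrm{rank}\,A$ and, in particular, every retained variable occurs in some equation; after Rule~\ref{rule1} the $\mathbb{F}_2$-vectors $v_1,\dots,v_m$ of the equations are pairwise distinct and nonzero. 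Viewing each product $\prod_{i\in I_j}x_i$ as the character $\chi_{v_j}$ on $\{-1,1\}^n$, distinct equations give distinct characters of degree $|I_j|\le r(n)$.

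Second, I would extract a large ``triangular'' family of independent equations greedily. Keeping a set $B$ of already-used variables, I repeatedly pick an equation $e_{i_t}$ whose support contains some $z_t\notin B$, designate $z_t$ as its pivot, and add $I_{i_t}$ to $B$. Each step enlarges $B$ by at most $r(n)$, and the process can stop only once $B$ contains every variable occurring in some equation, i.e. once $|B|=n$. Hence the number $T$ of selected equations satisfies $T\cdot r(n)\ge n\ge(2k-1)r(n)+1$, so $T\ge 2k$. By construction $z_t\notin I_{i_s}$ for $s<t$, so the selected vectors are in echelon form and linearly independent; write $W=\mathrm{span}(v_{i_1},\dots,v_{i_T})$, so $\dim W=T$.

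Third, I would convert this family into excess. Fixing the non-pivot variables, I back-substitute in pivot order $z_1,\dots,z_T$: when $z_t$ is set, the earlier selected equations $e_{i_1},\dots,e_{i_{t-1}}$ are already determined and do not contain $z_t$, while $e_{i_t}$ does, so I can always satisfy $e_{i_t}$ without disturbing the earlier ones. All $T$ selected equations are thus satisfied, contributing $\frac12\sum_{t=1}^T w_{i_t}\ge T/2\ge k$ to $\varepsilon_S$. To argue the remaining equations do not spoil this, I would let the non-pivot variables be uniformly random, so that the assignment $x$ is uniform on the affine subspace $V=\{x:\chi_{v_{i_t}}(x)=b_{i_t}\text{ for all }t\}$. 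For every non-selected equation with $v_j\notin W$, the restriction of $\chi_{v_j}$ to $V$ is a nonconstant character and hence has mean $0$, so such equations contribute nothing in expectation.

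Finally, the crux — and the step I expect to be the main obstacle — is the non-selected equations whose vector lies in $W$: on $V$ each such $\chi_{v_j}$ is a constant sign $\pm1$, so these equations could be systematically unsatisfied and pull the expected excess below $k$. Controlling them is exactly where the degree bound $|I_j|\le r(n)$ must be used: writing $v_j=\sum_{t\in T_j}v_{i_t}$, the support $I_j$ is the symmetric difference of the $I_{i_t}$, and one checks $I_j$ still contains the pivot $z_{\max T_j}$. The plan is to exploit this either by choosing the independent family so that $W$ contains no further equation, or by averaging over all $2^T$ cosets of $W^{\perp}$ and using the triangular and bounded-degree structure to show that the in-span equations cannot overwhelm the guaranteed gain of $T/2\ge k$. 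Making this simultaneous control rigorous is the technical heart of the argument; once it is in place, some assignment attains $\varepsilon_S\ge k$, and $J$ is a {\sc Yes}-instance.
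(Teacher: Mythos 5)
You should first note that this paper does not actually prove Theorem~\ref{lemYes}: it is imported from \cite{CroFellowsEtAl11} (``Using techniques from linear algebra, the authors of \cite{CroFellowsEtAl11} showed the following''), so your attempt must be measured against the argument there. Your first four steps are internally correct: after Rule~\ref{rulerank} every variable occurs in some equation, the greedy extraction does give $T\ge 2k$ equations with private pivots, back-substitution does satisfy all of them, and out-of-span equations do have mean zero over the coset $V$. But the obstacle you flag at the end is not a technical gap that can be closed within your framework, because from step two onward you retain only the information ``$S$ is reduced, carries a triangular family of $2k$ integer-weight equations, and all supports have size at most $r$,'' and that information does not imply excess at least $k$. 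Take $k=1$, $r=2$ and the system $x_1=1$, $x_2=1$, $x_1x_2=-1$, each of weight $1$: it is reduced with respect to both rules, the two unit equations form a triangular family with $T=2=2k$, yet every assignment falsifies at least one of the three equations, so the maximum excess is $1/2<1$. This does not contradict the theorem, since $n=2<(2k-1)r+1=3$; the point is that your greedy steps may each cover fewer than $r$ new variables, so ``$T\ge 2k$'' is strictly weaker than the hypothesis $n\ge(2k-1)r(n)+1$, which your argument never uses again. Neither of your proposed repairs survives this example: the $2$-dimensional span in a $2$-variable system necessarily contains all three equation vectors, and averaging over the cosets of $W^{\perp}$ gives expected excess exactly $0$.

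The actual proof (in \cite{CroFellowsEtAl11}, building on the machinery of \cite{CroGutJonKimRuz10,GutKimSzeYeo11}) avoids committing to one assignment on one fixed family. It proceeds by elimination: choose an equation $e$ of weight $w$ and a variable $x_j$ in it, pledge to set $x_j$ (as a function of the remaining variables) so that $e$ is satisfied, and observe that every other equation $e'$ containing $x_j$ is then satisfied exactly when $e'\cdot e$ is; deleting $e$, replacing each such $e'$ by $e'\cdot e$ and re-reducing by Rules~\ref{rulerank} and~\ref{rule1} yields a smaller system $S'$ with $\max\text{-excess}(S)\ge w/2+\max\text{-excess}(S')$. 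The in-span equations you could not control are precisely those that, after these substitutions, collide with other supports and get merged or cancelled by Rule~\ref{rule1}; their effect is absorbed into the excess of the re-reduced system rather than charged against a fixed assignment. The hypothesis $n\ge(2k-1)r(n)+1$ is then what guarantees that this choose--eliminate--re-reduce loop can be sustained for $2k$ rounds, each contributing at least $1/2$ since weights are positive integers. In short, the argument has to be recursive, with the reduction rules reapplied after every elimination, not a one-shot selection of a triangular family.
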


Let $I$ be an instance of {\sc MaxSat-AA} given by a CNF formula $F$ with clauses $c_1,\ldots , c_m$, and variables $x_1,\ldots ,x_n.$ It will be convenient for us to denote {\sc true} and {\sc false} by $-1$ and $1$, respectively. For a truth assignment $x^0\in \{-1,1\}^n$, the {\em excess} $\ep_I(x^0)$ for $x^0$ is the number of clauses satisfied by $x^0$ minus ${\rm asat}(x^0)$. Thus, the answer to $I$ is {\sc Yes} if and only if there is an assignment $x^0$ with $\ep_I(x^0)\ge k.$

{\sc Max-$r(n)$-Sat-AA} is related to {\sc Max-$r(n)$-Lin2-AA} as follows. Results similar to Lemma \ref{lemLinSat} have been proved in \cite{AloGutKimSzeYeo11,CroGutJonKimRuz10}.

\begin{lemma}\label{lemLinSat}
Let $I$ be an instance of {\sc Max-$r(n)$-SAT-AA} with $n$ variables, $m$ clauses and parameter $k$. Then in time $2^{r(n)}m^{O(1)}$ we can produce an instance $J$ of {\sc Max-$r(n)$-Lin2-AA}  with parameter $k 2^{r(n)-1}$ such that $I$ is a {\sc Yes}-instance if and only if $J$ is a {\sc Yes}-instance and
$J$ is reduced by Rule \ref{rule1}. Moreover, for any truth assignment $x\in \{-1,1\}^n$, $\ep_J(x) = \varepsilon_I(x) \cdot 2^{r(n)-1}$.
\end{lemma}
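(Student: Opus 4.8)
The plan is to realize each clause's contribution to the excess $\ep_I$ as a low-degree multilinear polynomial over the $\pm1$-variables, and then to read off one {\sc MaxLin2} equation per surviving monomial. First I would encode literals algebraically: with {\sc true}$=-1$ and {\sc false}$=1$, a positive literal $x_l$ is true exactly when $x_l=-1$, and a negative literal $\bar x_l$ is true exactly when $-x_l=-1$. So to each literal $z$ of a clause I associate a value $\ell_z\in\{x_l,-x_l\}$ with the property that $z$ is true iff $\ell_z=-1$. A clause $c_i$ is then falsified precisely when all its literals are false, and the indicator of ``$c_i$ satisfied'' equals $1-\prod_{z\in c_i}\tfrac{1+\ell_z}{2}$.

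Expanding the product gives $1-2^{-r_i}\sum_{T\subseteq[r_i]}\prod_{z\in T}\ell_z$, whose constant (empty-set) term combines with the leading $1$ to produce exactly the average value $1-2^{-r_i}$ of that clause. Subtracting the average, the per-clause excess is $-2^{-r_i}\sum_{\emptyset\ne T\subseteq[r_i]}\prod_{z\in T}\ell_z$, and summing over $i$ yields $\ep_I(x)=-\sum_{i}2^{-r_i}\sum_{\emptyset\ne T\subseteq[r_i]}\prod_{z\in T}\ell_z$. Assuming (as we may) that every clause has distinct variables, each product $\prod_{z\in T}\ell_z$ equals $\sigma_{i,T}\,\chi_{I(i,T)}(x)$ for a sign $\sigma_{i,T}=\pm1$ and a nonempty variable set $I(i,T)$ of size at most $r_i\le r(n)$, where I write $\chi_I(x)=\prod_{l\in I}x_l$.

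Next I would build $J$ by collecting these monomials. Multiplying the identity by $2^{r(n)-1}$ and grouping the terms sharing the same variable set $I$, I obtain $2^{r(n)-1}\ep_I(x)=\tfrac12\sum_I c_I\,\chi_I(x)$ with $c_I=-\sum_{(i,T):\,I(i,T)=I}2^{r(n)-r_i}\sigma_{i,T}$. Since $r_i\le r(n)$, each factor $2^{r(n)-r_i}$ is a nonnegative integer power of two, so every $c_I$ is an integer; for each $I$ with $c_I\ne0$ I create the equation $\chi_I(x)=\operatorname{sign}(c_I)$ of positive integral weight $|c_I|$, and I discard the $I$ with $c_I=0$. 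Distinct equations then have distinct variable sets, so $J$ is already reduced with respect to Rule~\ref{rule1}, every equation has $1\le|I|\le r(n)$, and comparing with the {\sc MaxLin2} excess formula $\ep_J(x)=\tfrac12\sum_j c_j\chi_{I_j}(x)$ gives exactly $\ep_J(x)=2^{r(n)-1}\ep_I(x)$ for all $x$. Hence $\ep_I(x)\ge k$ iff $\ep_J(x)\ge k2^{r(n)-1}$, which is the required equivalence of {\sc Yes}-instances with parameter $k2^{r(n)-1}$. Each clause spawns at most $2^{r_i}\le 2^{r(n)}$ monomials, so the whole construction, grouping included, runs in time $2^{r(n)}m^{O(1)}$.

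The routine but error-prone part is the bookkeeping: verifying that the empty-set term of the expansion is precisely the per-clause average (so that only the nonempty $T$ contribute to the excess), tracking the literal signs $\sigma_{i,T}$ correctly, and—most importantly—checking that scaling by $2^{r(n)-1}$ together with the inherent factor $\tfrac12$ in the {\sc MaxLin2} excess turns every coefficient into an integer, which is exactly where the hypothesis $r_i\le r(n)$ is used. The only genuine subtlety is cancellation: different (clause, subset) pairs may produce the same variable set, so $c_I$ is a signed sum that can vanish; accounting for this is precisely what makes the output automatically reduced by Rule~\ref{rule1}.
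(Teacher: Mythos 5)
Your proposal is correct and follows essentially the same route as the paper: both expand the $\pm1$ indicator of clause satisfaction into a multilinear polynomial, identify the empty-monomial term with the clause's average, scale so that all coefficients become integers (the paper scales each clause by $2^{r(n)-r_j}$ up front to get $H(x)=2^{r(n)}\ep_I(x)$, you scale the excess by $2^{r(n)-1}$ at the end — the same computation in a different order), and turn each surviving monomial into one weighted equation, with Rule~\ref{rule1}-reducedness following from the grouping of like monomials. The bookkeeping you flag (integrality via $r_j\le r(n)$, the factor $\tfrac12$ in the {\sc MaxLin2} excess, and cancellation across clause--subset pairs) all checks out.
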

\begin{proof}
Let $I$ be an instance of {\sc Max-$r(n)$-SAT-AA} with clauses $c_1,\ldots , c_m$ and variables $x_1,\ldots , x_n.$
For a clause $c_j$, $\var(c_j)$ will denote the set of variables in $c_j$ and $r_j$ the number of literals in $c_j.$
For every $j\in [m]$, let $$h_j(x)=2^{r(n)-r_j}[1-\prod_{x_i\in \var(c_j)}(1+d_{ij}x_i)],$$
where $d_{ij}=1$ if $x_i$
is in~$c_j$ and $d_{ij}=-1$ if $\bar{x}_i$
is in~$c_j$.

Let $H(x)=\sum_{j=1}^mh_j(x).$ We will prove that for a truth assignment $x\in \{-1,1\}^n$, we have \begin{equation}\label{eq1}H(x)=2^{r(n)}\ep_I(x).\end{equation}

Let $q_j=1$ if $c_j$ is satisfied by $x$ and $q_j=0$, otherwise. Observe that $h_j(x)/(2^{r(n)-r_j})$ equals $1-2^{r_j}$ if $q_j=0$ and 1, otherwise.
Thus, \begin{center}
$\begin{array}{rcl}
  H(x) & = & \sum^m_{j=1}[2^{r(n)-r_j}q_j+(2^{r(n)-r_j}-2^{r(n)})(1-q_j)] \\
  & = & 2^{r(n)}[\sum^m_{j=1}q_j-\sum_{j=1}^m(1-2^{-r_j})] \\
  & = &2^{r(n)}\ep_I(x). \\
  \end{array}$
\end{center}

It follows from (\ref{eq1}) that the answer to $I$ is {\sc Yes}
if and only if there is a truth assignment $x$ such that
\begin{equation}\label{eq2} H(x)\ge k2^{r(n)}.\end{equation}

Algebraic simplification of $H(x)$ will lead us to the following:

\begin{equation}\label{foureq} H(x)=\sum_{S\in {\cal F}}c_S\prod_{i\in S}x_i,\end{equation}
where ${\cal F}=\{\emptyset\neq S\subseteq [n]:\ c_S\neq 0, |S|\le r(n) \}$. The simplification can be done
in time $2^{r(n)}m^{O(1)}$.

Observe that by replacing each term $c_S\prod_{i\in S}x_i$ with the equation $\prod_{i\in S}x_i = 1$ if $c_S \ge 0$ and $\prod_{i\in S}x_i = -1$ if $c_S < 0$, with weight $|c_S|$,
the sum $\sum_{S\in {\cal F}}c_S\prod_{i\in S}x_i$ can be viewed as twice the excess of an instance $J$ of
{\sc Max-$r(n)$-Lin2-AA}. Let $k 2^{r(n)-1}$ be the parameter of $J$. Then, by (\ref{eq2}), $I$ and $J$ are equivalent.

Note that the algebraic simplification of $H(x)$ ensures that $J$ is reduced by Rule \ref{rule1}.
This completes the proof.
\end{proof}

It is important to note that the resulting instance $J$ of {\sc Max-$r(n)$-Lin2-AA} is not necessarily reduced under Rule \ref{rulerank} and, thus, reduction of $J$ by Rule \ref{rulerank} may result in less than $n$ variables.

{From} Theorem \ref{lemYes} and Lemma \ref{lemLinSat} we have the following fixed-parameter-tractability result for {\sc Max-$r(n)$-SAT-AA}.

\begin{theorem}\label{thm:maxsataafpt}
 \MrSAA{} is (i) in XP for $r(n)\le \log\log n-\log\log\log n$ and (ii) fixed-parameter tractable for $r(n)\le \log\log n-\log\log\log n-\phi(n)$, for any unbounded strictly increasing function $\phi(n)$.
\end{theorem}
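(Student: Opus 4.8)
The plan is to feed \MrSAA{} into {\sc Max-$r(n)$-Lin2-AA} via Lemma \ref{lemLinSat}, exploit the structural bound of Theorem \ref{lemYes} to either answer {\sc Yes} outright or shrink the number of variables, and then finish by brute force. Writing $r=r(n)$, given an instance $I$ of \MrSAA{} with $n$ variables and $m$ clauses I would first invoke Lemma \ref{lemLinSat} to build, in time $2^{r}m^{O(1)}$, an equivalent instance $J$ of {\sc Max-$r(n)$-Lin2-AA} with parameter $k'=k2^{r-1}$; by that lemma $J$ is already reduced under Rule \ref{rule1}. Next I would apply Rules \ref{rulerank} and \ref{rule1} exhaustively: each application preserves the maximum excess and runs in polynomial time, and since each strictly decreases the number of variables plus the number of equations the process halts after polynomially many steps, leaving an instance $J'$ reduced under both rules, with the same maximum excess, with $n'\le n$ variables, and with every equation still of length at most $r$.

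I would then consult Theorem \ref{lemYes} with parameter $k'$: if $n'\ge (2k'-1)r+1$, then $J'$ — and hence $J$ and $I$ — is a {\sc Yes}-instance and the algorithm stops. Otherwise $n'<(2k'-1)r+1\le k2^{r}r$, and I would decide $J'$ by trying all $2^{n'}$ assignments and reading off the maximum excess, at cost $2^{n'}\cdot|J'|^{O(1)}$ with $2^{n'}\le 2^{k2^{r}r}$. Everything else is polynomial because the number of equations of $J'$ is at most $m2^{r}$ and, in both regimes below, $2^{r}\le \log n$.

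The whole content is then the arithmetic of the exponent $2^{r}r$. For (i), when $r\le \log\log n-\log\log\log n$ I would use $2^{r}\le \log n/\log\log n$ together with $r\le \log\log n$ to get $2^{r}r\le \log n$, so that $2^{n'}\le 2^{k\log n}=n^{k}$ and the whole computation runs in $|I|^{O(k)}$ time, which is XP. For (ii), when $r\le \log\log n-\log\log\log n-\phi(n)$ the same estimate sharpens to $2^{r}r\le \log n/2^{\phi(n)}$, giving $2^{n'}\le n^{k/2^{\phi(n)}}$. Here I would split on the size of $n$. If $2^{\phi(n)}\ge k$ the exponent is at most $1$, so $2^{n'}\le n$ and the algorithm is polynomial. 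If $2^{\phi(n)}<k$, then because $\phi$ is unbounded and strictly increasing the set of such $n$ forms a finite initial segment bounded by some $N(k)$, so $I$ has fewer than $N(k)$ variables and I would instead decide it directly by brute force over its $2^{n}<2^{N(k)}$ truth assignments. Both branches run in $f(k)\,|I|^{O(1)}$ time, establishing fixed-parameter tractability.

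I expect the last case analysis in (ii) to be the main obstacle: the Lin2 brute force by itself only gives $n^{k/2^{\phi(n)}}$, which becomes genuinely polynomial only once $2^{\phi(n)}$ overtakes $k$, so the small-$n$ regime has to be peeled off and handled by a separate search on the original formula, with the unboundedness of $\phi$ supplying the bound $n<N(k)$ there. By comparison, the remaining tasks — checking that exhaustive use of Rules \ref{rulerank} and \ref{rule1} really does leave $J'$ reduced under both (so Theorem \ref{lemYes} is applicable) and that all overhead stays polynomial thanks to $2^{r}\le\log n$ — should be routine.
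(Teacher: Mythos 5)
Your proposal is correct and follows essentially the same route as the paper: reduce to {\sc Max-$r(n)$-Lin2-AA} via Lemma \ref{lemLinSat}, use Theorem \ref{lemYes} either to answer {\sc Yes} or to bound the number of surviving variables by $k2^{r(n)}r(n)$, and then brute force, with the same arithmetic ($2^{r}r\le\log n$ for XP, $2^{r}r\le\log n/2^{\phi(n)}$ plus peeling off the finitely many $n$ with $\phi(n)<\log k$ for FPT). Your only real deviation --- applying Rules \ref{rulerank} and \ref{rule1} exhaustively rather than applying Rule \ref{rulerank} once --- is a harmless (indeed slightly more careful) way of guaranteeing the reducedness hypothesis of Theorem \ref{lemYes}.
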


\begin{proof}
We start by proving Part (ii). Let $\phi(n)$ be an unbounded strictly increasing function of positive integral argument. Note that $\phi(n)$ can be extended to a continuous positive strictly increasing function $\phi(t)$ of real argument $t\ge 1$. Thus, $\phi(t)$ has an inverse function $\phi^{-1}(t).$
We may assume that $\phi(n)\ge 0$ for each $n\ge 1$ as otherwise we may consider only $n$ large enough.

Let $r(n)\le \log\log n-\log\log\log n-\phi(n)$ and consider an instance $I$ of {\sc Max-$r(n)$-Sat-AA}.
Note that $2^{r(n)} \le n$. Therefore by Lemma \ref{lemLinSat}, in polynomial time we can reduce $I$ into an instance $J$ of {\sc Max-$r(n)$-Lin2-AA}, such that $I$ is a {\sc Yes}-instance if and only if $J$ is a {\sc Yes}-instance with parameter $k \cdot 2^{r(n)-1}$.
Consider the {\sc Max-$r(n)$-Lin2-AA} instance $J'$ with $n'$ variables formed by reducing $J$ by Rule \ref{rulerank}.
If $n'\le \log n$, $J'$ may be solved in polynomial time by trying all $2^{n'}\le n$ assignments to the variables of $J'$. Thus, we may assume that $n'> \log n.$

If $n' \ge (k2^{r(n)}-1) r(n)+1$, then by Theorem~\ref{lemYes} and Lemma~\ref{lemLinSat},  $I$ is a {\sc Yes}-instance. Thus, it remains to consider the case
$\log n < n' \le (k2^{r(n)}-1) r(n)$. We have $$\log n \le (k2^{r(n)}-1) r(n) \mbox{ and so } \log n \le k (\log\log n) \cdot \log n/(2^{\phi(n)}\log\log n).$$ This simplifies to $\phi(n) \le \log k$ and so $n\le \phi^{-1}(\log k).$
Hence, $I$ can be solved in time $2^{\phi^{-1}(\log k)}m^{O(1)}$ by trying all possible assignments to variables
of $J'$.

Now we will prove Part (i). Let $r(n)\le \log\log n-\log\log\log n$ and consider an instance $I$ of {\sc Max-$r(n)$-Sat-AA}.
As in Part (ii) proof, we reduce $I$ into an instance $J$ of {\sc Max-$r(n)$-Lin2-AA}, such that $I$ is a {\sc Yes}-instance if and only if $J$ is a {\sc Yes}-instance with parameter $k \cdot 2^{r(n)-1}$.
Consider the {\sc Max-$r(n)$-Lin2-AA} instance $J'$ with $n'$ variables formed by reducing $J$ by Rule \ref{rulerank}.
If $n'\le k\log n$, $J'$ may be solved in XP time by trying all $2^{n'}\le n^k$ assignments to the variables of $J'$. Thus, we may assume that $n'> k\log n.$
If $n' \ge (k2^{r(n)}-1) r(n)+1$, then by Theorem~\ref{lemYes} and Lemma~\ref{lemLinSat},  $I$ is a {\sc Yes}-instance. Thus, it remains to consider the case
$k\log n < n' \le (k2^{r(n)}-1) r(n)$. We have $k\log n \le kr(n)2^{r(n)}$ and so $\log\log n \le \log\log n-\log\log\log n,$ a contradiction. Thus, this case is impossible and we can solve $I$ in XP time.
\end{proof}

\section{Open Problems}\label{sec:disc}

It would be interesting to close the gap between the inequalities of Theorems \ref{th2} and \ref{thm:maxsataafpt}.

Apart from {\sc MaxLin-AA} and {\sc MaxSat-AA} mentioned above, there are some other constraint satisfaction problems parameterized above a tight lower bound whose complexity has been established in the last few years. One example is {\sc $r$-Linear-Ordering-AA} for $r=2$ and 3. Let $r\ge 2$ be a fixed integer. In {\sc $r$-Linear-Ordering}, given a positive integer $n$ and a multiset $\cal C$ of $r$-tuples of distinct elements from $[n]$, we wish to find a permutation $\pi:\ [n]\rightarrow [n]$ which maximizes that the number of {\em satisfied} $r$-tuples, i.e., $r$-tuples $(i_1,i_2,\ldots ,i_r)$ such that $\pi(i_1)<\pi(i_2)<\cdots <\pi(i_r).$ Let $m$ stand for the number of $r$-tuples in $\cal C$.

Let $\tau:\ [n]\rightarrow [n]$ be a random permutation (chosen uniformly from the set of all permutations). Observe that the probability that an $r$-tuple is  satisfied by $\tau$ is $1/r!$. Thus, by linearity of expectation, the expected number of $r$-tuples satisfied by $\tau$ is $m/r!$. Using conditional expectation derandomization method \cite{alon}, it is not difficult to obtain a polynomial time algorithm for finding a permutation $\pi$ which satisfies at least $m/r!$ $r$-tuples. Thus, we can easily obtain an $1/r!$-approximation algorithm for {\sc $r$-Linear Ordering}. It is remarkable that for any positive $\ep$ there  exists no polynomial $(1/r! + \ep )$-approximation algorithm provided the Unique Games Conjecture (UGC) of Khot \cite{khot} holds. This result was proved by Guruswami {\em et al.} \cite{GurManRag} for $r=2$, Charikar {\em et al.} \cite{ChaGurMan} for $r=3$ and, finally,  by Guruswami {\em et al.} \cite{GurHasManRagCha} for any $r$.

Observe that every permutation $\pi$ satisfies exactly one $r$-tuple in the set  $\{(i_1,i_2,\ldots , i_r):\ \{i_1,i_2,\ldots ,i_r\}=[r]\}$ and, thus,
$m/r!$ is a tight lower bound on the maximum number of $r$-tuples that can be satisfied by a permutation $\pi$. It is natural to ask what is the parameterized complexity of the following problem {\sc $r$-Linear-Ordering-AA}: decide whether there is a permutation $\pi$ which satisfies at least $m/r! + k$ $r$-tuples, where $k\ge 0$ is the parameter. Gutin {\em et al.} \cite{GutKimSzeYeo11} and \cite{GutinIerselMnichYeo} proved that {\sc $r$-Linear-Ordering-AA} is fixed-parameter tractable for $r=2$ and $r=3$, respectively. The complexity of {\sc $r$-Linear-Ordering-AA} for $r\ge 4$ remains an open problem \cite{GutinIerselMnichYeo}. Note that
if {\sc $r$-Linear-Ordering-AA} is fixed-parameter tractable for some $r$, then all permutation constraint satisfaction problems of arity $r$ parameterized above average are fixed-parameter tractable too (see \cite{GutinIerselMnichYeo} for the definition of a permutation constraint satisfaction problem of arity $r$ and a proof of the above-mentioned fact).

\medskip

\paragraph{Acknowledgment}
This research was partially supported by an International Joint grant of Royal Society.

\urlstyle{rm}


\end{document}